\tikzstyle{vertex}=[draw,circle,minimum size=7pt,inner sep=0pt]
\newlist{shortlist}{itemize}{1}
\setlist[shortlist]{label=\textbullet,nosep}
\newtheorem{thm}{Theorem}
\newtheorem{cor}[thm]{Corollary}
\newtheorem{lem}[thm]{Lemma}
\theoremstyle{definition}
\newtheorem*{conj*}{Conjecture}
\newcommand{\bigO}{{\mathcal{O}}}
\newcommand{\ie}{{\emph{i.e., }}}
\newcommand{\etal}{{\emph{et al.}}}
\newcommand{\Path}[1]{\ensuremath{{\sf P}_{#1}}}          
\newcommand{\K}[1]{\ensuremath{{\sf K}_{#1}}}             
\newcommand{\M}[1]{\ensuremath{{\sf M}_{#1}}}             
\newcommand{\extG}[3]{\ensuremath{{\sf E}_{#1,#2,#3}}}    
\newcommand{\dg}{\ensuremath{deg}}              
\newcommand{\dist}{\ensuremath{d}}              
\newcommand{\ecc}{\ensuremath{\epsilon}}        
\newcommand{\W}{\ensuremath{W}}                 
\newcommand{\eci}{\ensuremath{\xi^c}}           
\title{Maximum Eccentric Connectivity Index\\for Graphs with Given
  Diameter}
\author{
  \large
  \textsc{
    Pierre Hauweele\textsuperscript{1},
    Alain Hertz\textsuperscript{2}%
    \footnote{Corresponding author : email  alain.hertz@gerad.ca;  tel. +1-514 340 6053.},
    Hadrien M\'elot\textsuperscript{1},}\\
  \textsc{
    Bernard Ries\textsuperscript{3},
    Gauvain Devillez\textsuperscript{1}}\\[3mm]
  \normalsize \textsuperscript{1} Computer Science Department - Algorithms Lab\\
  \normalsize University of Mons, Mons, Belgium\\[3mm]
  \normalsize \textsuperscript{2} Department of Mathematics and Industrial
    Engineering\\
  \normalsize \'Ecole Polytechnique - Gerad, Montréal, Canada\\[3mm]
  \normalsize \textsuperscript{3} Department of Informatics\\
  \normalsize University of Fribourg, Fribourg, Switzerland
}
\begin{document}

\maketitle
\vspace*{0.2cm}

\begin{abstract}
  The eccentricity of a vertex $v$ in a graph $G$ is the maximum distance
  between $v$ and any other vertex of $G$. The diameter of a graph $G$ is
  the maximum eccentricity of a vertex in $G$. The eccentric connectivity
  index of a connected graph is the sum over all vertices of the product
  between  eccentricity and  degree. Given two integers $n$ and $D$
  with $D\leq n-1$, we characterize those graphs which have the largest
  eccentric connectivity index among all connected graphs of order $n$ and
  diameter $D$. As a corollary, we also characterize those graphs which
  have the largest eccentric connectivity index among all connected graphs
  of a given order $n$.
\end{abstract}
\section{Introduction}
Let $G=(V,E)$ be a simple connected undirected graph. The \emph{distance} $\dist(u,v)$
between two vertices $u$ and $v$ in $G$ is the number of edges of a shortest path in $G$ connecting $u$ and $v$. 
The \emph{eccentricity}
$\ecc(v)$ of a vertex $v$ is the maximum distance between $v$ and
any other vertex, that is $\max \{ \dist(v, w) ~|~ w \in V \}$. The
\emph{diameter} 
of $G$ is the maximum eccentricity among all
vertices of $G$. 
The \emph{eccentric connectivity index} $\eci(G)$ of $G$ is defined by
\[
\eci(G) = \sum_{v \in V} \dg(v) \ecc(v).
\]
This index was introduced by Sharma \etal~in~\cite{Sharma97}. Alternatively,
$\eci$ can be computed by summing the eccentricities of the extremities
of each edge:
\[
\eci(G) = \sum_{vw \in E} \ecc(v) + \ecc(w).
\]
We define the weight of a vertex by $\W(v) = \dg(v) \ecc(v)$, and we thus have $\eci(G) = \sum_{v \in V} \W(v)$.
Morgan \etal~\cite{Morgan2011} give the following asymptotic
upper bound on $\eci(G)$ for a graph $G$ of order $n$ and with a given diameter $D$.

\begin{thm}[Morgan, Mukwembi and Swart, 2011~\cite{Morgan2011}]
  Let $G$ be a connected graph of order $n$ and diameter $D$. Then,
  \[
    \eci(G) \le D(n-D)^2 + \bigO(n^2).
  \]
\end{thm}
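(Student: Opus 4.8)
The plan is to bound $\eci(G)$ by the number of edges times the maximum possible per-edge contribution, and then to control the number of edges through the layered structure forced by a diametral path.

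First, since the diameter is the maximum eccentricity, every vertex satisfies $\ecc(v)\le D$. Using the edge formulation $\eci(G)=\sum_{vw\in E}\ecc(v)+\ecc(w)$ recalled in the excerpt, each edge contributes at most $2D$, so $\eci(G)\le 2D\,|E|$. It therefore suffices to prove the edge bound $|E|\le \tfrac12(n-D)^2+\bigO(n)$: multiplying by $2D\le 2n$ converts the $\bigO(n)$ term into $\bigO(n^2)$ and yields exactly $\eci(G)\le D(n-D)^2+\bigO(n^2)$. Note that the trivial estimate $|E|\le\binom n2$ is useless here, since for $D=\Theta(n)$ it gives $2D\binom n2=\Theta(n^3)$, far above the target; the diameter must be exploited to force sparsity.

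To bound $|E|$, fix a diametral path and let $u_0$ be one of its endpoints, so $\ecc(u_0)=D$. Partition $V$ by distance from $u_0$ into levels $L_i=\{v:\dist(u_0,v)=i\}$ for $0\le i\le D$; connectivity together with $\ecc(u_0)=D$ guarantees that every $L_i$ is nonempty, so the sizes $a_i:=|L_i|$ satisfy $a_i\ge1$ and $\sum_i a_i=n$. Every edge lies inside a single level or between two consecutive levels, whence $|E|\le\sum_{i}\binom{a_i}{2}+\sum_i a_i a_{i+1}=\tfrac12 S-\tfrac n2$, where $S=\sum_i a_i^2+2\sum_i a_i a_{i+1}$. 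The heart of the argument is to show $S\le(n-D)^2+\bigO(n)$. Setting $c_i=a_i-1\ge0$ with $\sum_i c_i=n-D-1$, a short expansion reduces $S$ to $\sum_i c_i^2+2\sum_i c_i c_{i+1}$ up to terms that are $\bigO(n)$; the remaining quadratic is at most $\bigl(\sum_i c_i\bigr)^2=(n-D-1)^2$, because the consecutive products $c_ic_{i+1}$ form a subset of the nonnegative cross terms appearing in $\bigl(\sum_i c_i\bigr)^2$. This gives the required edge bound, and combining it with $\eci(G)\le 2D|E|$ completes the proof.

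I expect the main obstacle to be this edge bound, and specifically the passage from $\sum_i a_i a_{i+1}$ to the concentrated quadratic $\bigl(\sum_i c_i\bigr)^2$. One must check that discarding the non-consecutive cross terms $c_ic_j$ with $|i-j|\ge2$ — which correspond precisely to the non-edges forced between far-apart levels — costs only the $\bigO(n)$ slack and does not erode the leading $(n-D)^2$ term. The shift $c_i=a_i-1$ is what isolates the genuine degree of freedom (the $n-D-1$ ``extra'' vertices beyond one per level) from the $D+1$ mandatory representatives, and it is this separation that produces the correct coefficient on $(n-D)^2$ rather than a weaker $(n-D)(n+D)$-type bound.
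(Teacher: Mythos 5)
Your proof is correct, but note first that the paper does not actually prove this statement: it is quoted as a known result of Morgan, Mukwembi and Swart, with a citation standing in for the proof. So the comparison here is with the literature, not with an argument in the paper itself. Your route is self-contained and sound: the reduction $\eci(G)\le 2D|E|$ via the edge formulation and $\ecc(v)\le D$ is valid; the BFS layering from a diametral endpoint $u_0$ (with $\ecc(u_0)=D$, so the levels $L_0,\dots,L_D$ are nonempty and partition $V$) correctly forces every edge to lie within one level or between consecutive levels; and the key estimate goes through, since with $c_i=a_i-1\ge 0$ and $\sum_i c_i=n-D-1$ one has
\[
\sum_i a_i^2+2\sum_i a_ia_{i+1}=\sum_i c_i^2+2\sum_i c_ic_{i+1}+\bigO(n)\le\Bigl(\sum_i c_i\Bigr)^2+\bigO(n)=(n-D-1)^2+\bigO(n),
\]
where the middle inequality holds because the consecutive products $c_ic_{i+1}$ form a subfamily of the nonnegative cross terms of $\bigl(\sum_i c_i\bigr)^2$. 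Multiplying the resulting edge bound $|E|\le\tfrac12(n-D)^2+\bigO(n)$ by $2D\le 2n$ indeed yields $D(n-D)^2+\bigO(n^2)$, with a uniform implied constant. In effect you have reproved an asymptotic form of Ore's classical theorem on the maximum size of a graph of given order and diameter, and combined it with the trivial bound $\ecc(v)\le D$; this global, two-step argument is coarser than the vertex-by-vertex weight bookkeeping that the present paper develops for its exact result (Theorem~\ref{thm:ub_diam}), where degree and eccentricity must be traded off along the diametral path rather than each bounded separately — that finer analysis is precisely what the $\bigO(n^2)$ slack allows you to skip, and it is also why your method cannot by itself identify the extremal graphs $\extG{n}{D}{k}$.
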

In what follows, we write $G \simeq H$ if $G$ and $H$ are two isomorphic
graphs, and we let $\K{n}$ and $\Path{n}$ be the \emph{complete graph} and the
\emph{path} of order $n$, respectively.  We refer to Diestel~\cite{Diestel00}
for basic notions of graph theory that are not defined here.
A \emph{lollipop} $L_{n,D}$ is a graph obtained from a path $\Path{D}$ by
joining an end vertex of this path to $\K{n-D}$. Morgan \etal~\cite{Morgan2011}
state that the above asymptotic bound is best possible by showing that
$\eci(L_{n,D}) = D(n-D)^2 + \bigO(n^2)$. The aim of this paper is to give a
precise upper bound on $\eci(G)$ in terms of $n$ and $D$, and to completely
characterize those graphs that attain the bound. As a result, we will observe
that there are graphs $G$ of order $n$ and diameter $D$ such that $\eci(G)$
is strictly larger than $\eci(L_{n,D})$.

Morgan \etal~\cite{Morgan2011} also give an asymptotic upper bound on $\eci(G)$ for graphs $G$ of order $n$ (but without a fixed diameter), and show that this bound is sharp by observing that it is attained by $L_{n,\frac{n}{3}}$.

\begin{thm}[Morgan, Mukwembi and Swart, 2011~\cite{Morgan2011}]
  Let $G$ be a connected graph of order $n$. Then,
  \[
  \eci(G) \le \frac{4}{27}n^3 + \bigO(n^2).
  \]
\end{thm}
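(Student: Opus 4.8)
The plan is to deduce this bound directly from the diameter-dependent estimate of the preceding theorem by optimizing over the only free parameter, namely the diameter. Let $G$ be a connected graph of order $n$, and let $D$ denote its diameter (which necessarily satisfies $1 \le D \le n-1$). I would first invoke the first Morgan--Mukwembi--Swart theorem to obtain
\[
  \eci(G) \le D(n-D)^2 + \bigO(n^2).
\]
Since $G$ is arbitrary, its diameter $D$ is determined by $G$ but unknown a priori, so the strategy is to bound the leading term $D(n-D)^2$ by its maximum over all admissible values of $D$, reducing the problem to a one-variable optimization.

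Next I would analyze the real function $f(D) = D(n-D)^2$ on the interval $[0,n]$. Differentiating gives $f'(D) = (n-D)(n-3D)$, whose roots in the interval are $D = n/3$ and $D = n$; the latter yields $f(n) = 0$, while the former gives the interior maximum
\[
  f\!\left(\tfrac{n}{3}\right) = \tfrac{n}{3}\left(\tfrac{2n}{3}\right)^2 = \tfrac{4}{27}n^3 .
\]
Hence $D(n-D)^2 \le \tfrac{4}{27}n^3$ for every admissible $D$, and substituting this into the displayed inequality yields $\eci(G) \le \tfrac{4}{27}n^3 + \bigO(n^2)$, as required.

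The only point that needs care --- and the place where I would expect the main (though modest) obstacle to lie --- is the uniformity of the error term: to pass from the first theorem to the maximized bound, the $\bigO(n^2)$ contribution must be controlled by a constant independent of $D$, so that the pointwise replacement of $D(n-D)^2$ by its supremum does not quietly inflate the lower-order terms. Provided the first theorem is established with such a uniform error term (which is the natural reading of its statement), the substitution above is legitimate. Finally, sharpness follows from the lollipop $L_{n,n/3}$: since $D = n/3$ is exactly the maximizer of $f$, the already-noted estimate $\eci(L_{n,D}) = D(n-D)^2 + \bigO(n^2)$ specializes at this diameter to $\tfrac{4}{27}n^3 + \bigO(n^2)$, showing that the leading constant $\tfrac{4}{27}$ cannot be improved.
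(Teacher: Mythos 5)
Your proposal is correct, but note that the paper itself never proves this statement: it is quoted verbatim from Morgan, Mukwembi and Swart~\cite{Morgan2011}, so there is no internal proof to compare against. Your derivation---take the cited diameter-dependent bound $\eci(G) \le D(n-D)^2 + \bigO(n^2)$, observe that $f(D) = D(n-D)^2$ has $f'(D) = (n-D)(n-3D)$ and is maximized on $[0,n]$ at $D = n/3$ with value $\tfrac{4}{27}n^3$, and substitute---is exactly the natural route from the first theorem to the second, and it is sound. You are also right to flag the uniformity of the $\bigO(n^2)$ term as the one point requiring care: since the error term in the first theorem is expressed purely in $n$ (with $D \le n-1$ a parameter of the graph, not of the bound), the natural reading does give a constant $C$ with $\eci(G) \le D(n-D)^2 + Cn^2$ for all $G$ of order $n$, and then taking the supremum of the leading term over $D$ is legitimate. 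Two minor remarks: the maximization should be carried out over real $D \in [0,n]$, which costs nothing for an upper bound (integrality of $n/3$ is irrelevant here, though it matters slightly for the sharpness example, where one takes $D = \lfloor n/3 \rfloor$ and absorbs the discrepancy into $\bigO(n^2)$); and the sharpness discussion, while consistent with the paper's remarks about $L_{n,n/3}$, is not part of the stated theorem, which asserts only the upper bound.
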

We give a precise upper bound on $\eci(G)$ for graphs $G$ of order $n$,
and characterize those graphs that reach the bound. As a corollary, we show
that for every lollipop, there is another graph $G$ of same order, but with
a strictly larger eccentric connectivity index.

\section{Results for a fixed order and a fixed diameter}

The only graph with diameter $1$ is the clique, and clearly,
$\eci(\K{n})=n(n-1)$. Also, the only connected graph with $3$ vertices and
diameter 2 is $\Path{3}$, and $\eci(\Path{3})=\eci(\K{3})=6$.  The next
theorem characterizes the graphs with maximum eccentric connectivity index
among those with $n\geq 4$ vertices and diameter $2$. Let  $\M{n}$ be the graph obtained from $\K{n}$ by removing
a maximum matching (\ie $\lfloor\frac{n}{2}\rfloor$ disjoint edges) and,
if $n$ is odd, an additional edge adjacent to the unique vertex that still
has degree $n-1$. In other words, all vertices in $\M{n}$ have degree $n-2$,
except possibly one that has degree $n-3$. For illustration, $\M{6}$ and $\M{7}$ are drawn in Figure \ref{fig:specialcases}.
\begin{figure}[!h]
  \begin{center}
  \begin{tikzpicture}
    \begin{scope}
    \graph[empty nodes] {
      subgraph C_n [n=4, clockwise] -- mid;
    };
    \node at (0, -1.6) {\(H_1\)};
    \end{scope}
    \begin{scope}[xshift={3cm}]
    \graph[empty nodes, no placement] {
      {subgraph K_n [n=4, clockwise, phase=45,
              nodes={xshift={1.5cm}}]};
      a -- {3, 4};
      {1, 2} -- b [x=3];
    };
    \node at (1.5, -1.6) {\(H_2\)};
    \end{scope}
    \begin{scope}[xshift={7.5cm}]
    \graph[empty nodes] {
      a -- {[nodes={yshift=1cm}] 1, 2, 3} --[complete bipartite]
      {[nodes={yshift=0.5cm}] 4, 5} -- b;
      4 -- 5;
      1 -- 2 -- 3;
      1 --[bend right] 3
    };
    \node at (1.5, -1.6) {\(H_3\)};
    \end{scope}
    \begin{scope}[yshift={-3.7cm}]
    \begin{scope}
    \graph[simple, empty nodes] {
      subgraph K_n [n=6, clockwise, phase=120],
        1 -!- 2, 3 -!- 4, 5 -!- 6;
     };
     \node at (0, -2) {\M{6}};
    \end{scope}
    \begin{scope}[xshift={4.5cm}]
    \graph[simple, empty nodes] {
      subgraph K_n [n=7, clockwise],
      1 -!- 2, 3 -!- 4, 5 -!- 6, 7 -!- 1;
    };
    \node at (0, -2) {\M{7}};
    \end{scope}
    \begin{scope}[xshift={7cm},yshift={-1cm},new set=K]
    \graph[empty nodes]{
      subgraph P_n[n=5,branch right,name=P],
      {[nodes={set=K}, no placement, clique]
        1[y=1], 2[x=1,y=1], 3[x=0.5, y=1.87cm]},
      {P 1, P 2} --[complete bipartite] (K),
      P 3 --[dashed] (K);
    };
    \foreach \v [count=\i from 0] in {1,...,5}{
      \node[below=.01cm of P \v] {\(u_\i\)};
    }
    \node at (2, -1) {\extG{8}{4}{k}};
    \end{scope}
    \end{scope}
  \end{tikzpicture}
  \end{center}
\vspace{-0.5cm}\caption{Graphs $H_1, H_2, H_3$, $\M{6}$, $\M{7}$ and $\extG{8}{4}{k}$, (dashed edges depend
  on $k$)}
  \label{fig:specialcases}
\end{figure}
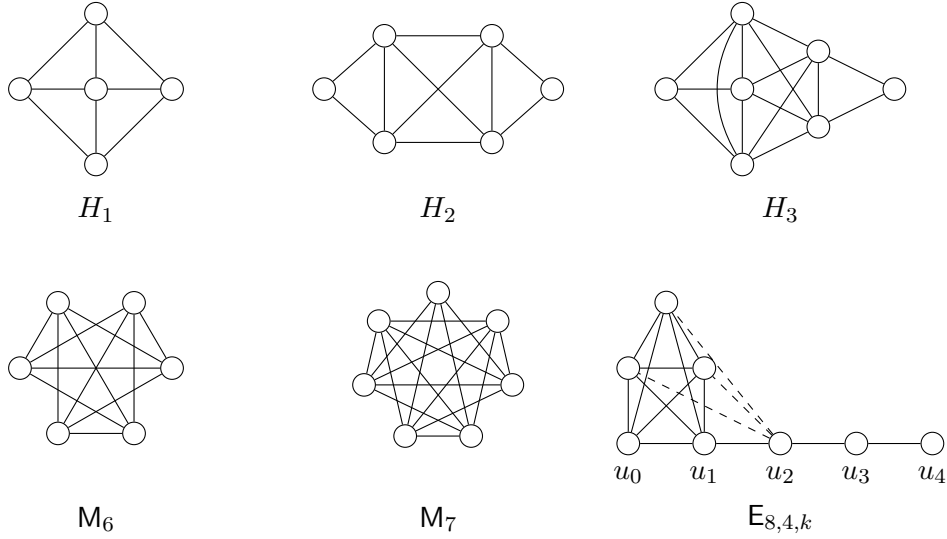

\begin{thm}
  Let $G$ be a connected graph of order $n\geq 4$ and diameter $2$. Then,
  \[
  \eci(G) \le 2n^2-4n-2(n\bmod 2)
  \]
  with equality if and only if $G\simeq\M{n}$ or $n=5$ and
  $G\simeq H_1$ (see Figure~\ref{fig:specialcases}).
\end{thm}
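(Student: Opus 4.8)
The plan is to exploit the rigid eccentricity structure forced by diameter $2$: every vertex $v$ satisfies $\ecc(v)=1$ if it is \emph{universal} (adjacent to all others, hence $\dg(v)=n-1$) and $\ecc(v)=2$ otherwise. Writing $k$ for the number of universal vertices, I would first establish the clean identity
\[
\eci(G)=2\sum_{v\in V}\dg(v)-\sum_{\ecc(v)=1}\dg(v)=4|E|-k(n-1),
\]
obtained by replacing each eccentricity by $2$ and correcting by $1$ on the universal vertices, using $\sum_v\dg(v)=2|E|$. Substituting $|E|=\binom{n}{2}-m$, where $m$ is the number of non-edges, turns this into
\[
\eci(G)=2n^2-2n-\bigl(4m+k(n-1)\bigr),
\]
so maximizing $\eci$ amounts to minimizing $4m+k(n-1)$.

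Next I would bound $m$ from below. A universal vertex is incident to no non-edge, so all non-edges lie among the $n-k$ non-universal vertices, and each such vertex is incident to at least one non-edge (otherwise it would be universal). Hence the non-edges form a graph with no isolated vertex on these $n-k$ vertices, giving $m\ge\lceil (n-k)/2\rceil$. Plugging this in, I would minimize $g(k)=4\lceil (n-k)/2\rceil+k(n-1)$ over $0\le k\le n-2$. A short computation ($g(k+2)-g(k)=2n-6>0$ for $n>3$, so $g$ increases within each parity class, together with comparing $g(0)$ and $g(1)$) shows the minimum equals $2n$ when $n$ is even and $2n+2$ when $n$ is odd, attained at $k=0$, and additionally at $k=1$ exactly when $n=5$. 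This yields $\eci(G)\le 2n^2-4n-2(n\bmod 2)$.

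For the characterization I would track equality through each inequality, which forces the minimizing value of $k$ together with $m=\lceil(n-k)/2\rceil$, i.e.\ the complement restricted to the non-universal vertices is an edge-minimal graph with no isolated vertex. When $n$ is even this means $k=0$ and the complement is a perfect matching, so $G$ is $\K{n}$ minus a perfect matching, i.e.\ $G\simeq\M{n}$. When $n$ is odd and $k=0$, an edge-minimal graph covering $n$ vertices has degree sequence $(2,1,\dots,1)$, forcing the complement to be $\Path{3}\cup\frac{n-3}{2}\K{2}$; recognizing this as precisely the set of edges deleted to form $\M{n}$ gives $G\simeq\M{n}$. For $n=5$ and $k=1$, the single universal vertex is joined to the complement of a perfect matching on the four remaining vertices, namely a $\C{4}$, which is exactly $H_1$. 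Finally I would verify that each of these candidates is connected of diameter exactly $2$ (every non-adjacent pair has a common neighbor), confirming admissibility.

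I expect the main obstacle to be the equality analysis rather than the bound itself: one must argue that edge-minimality of the complement pins the structure to a single isomorphism type in each parity, and—most delicately—handle the genuine tie at $n=5$, where the extra extremal graph $H_1$ appears and uniqueness fails. The parity bookkeeping inside the ceiling function and the verification of diameter exactly $2$ for the candidate extremal graphs are the remaining routine-but-error-prone points.
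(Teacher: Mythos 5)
Your proof is correct and takes essentially the same route as the paper: both count the universal vertices ($k$ in your notation, $x$ in the paper's), use that every non-universal vertex has degree at most $n-2$ together with a parity correction when the number of non-universal vertices is odd, and then optimize over $k$ — indeed your bound $2n^2-2n-g(k)$ agrees term-by-term with the paper's $2n^2-4n+x(3-n)-2((n-x)\bmod 2)$. The only difference is presentational: you phrase the degree deficit via non-edges of the complement, and you carry out the equality analysis (perfect matching vs.\ $\Path{3}\cup\frac{n-3}{2}\K{2}$ in the complement, and the tie at $n=5$ giving $H_1$) in more explicit detail than the paper, which leaves that identification implicit.
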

\begin{proof}
  Let $G$ be a graph of order $n$ and diameter $2$, and let $x$ be the number of vertices of degree $n-1$ in $G$. Clearly, $W(v)=n-1$ for all  vertices $v$ of degree $n-1$, while $W(v)\leq 2(n-2)$ for all other vertices $v$. Note that if $n-x$ is odd, then at least one vertex in $G$ has degree at most $n-3$. Hence,
\vspace{-0.2cm}  \begin{align*}
  \eci(G)&\leq x(n-1)+2(n-x)(n-2)-2((n-x)\bmod 2)\\
  &= 2n^2-4n+x(3-n)-2((n-x)\bmod 2).
  \end{align*}
  For $n=4$ or $n\geq 6$, this value is maximized with $x=0$. For $n=5$, both $x=1$ (\ie $G\simeq H_1$) and $x=0$ (\ie $G\simeq \M{5}$) give the maximum value $28=2n^2-4n+(3-n)-2((n-1) \bmod 2)=2n^2-4n-2(n \bmod 2)$.
\end{proof}

Before giving a similar result for graphs with diameter $D\geq 3$, we prove the following useful property.

\begin{lem} \label{lem:outsiders_connections_to_P}
  Let \(G \) be a connected graph of order \(n \geq 4\) and diameter
  $D\geq 3$. Let \(P\) be a shortest path in \(G\)  between two vertices at distance $D$, and assume there is a vertex \(u\) on \(P\) such that $\ecc(u)$ is strictly larger than the longest distance $L$ from \(u\) to an extremity of \(P\). Finally, let $v$ be a vertex in $G$ such that $d(v,u)=\ecc(u)$ and let $v=w_1 - w_2 -\ldots - w_{\ecc(u)+1}=u$ be a path of length $\ecc(u)$ linking $v$ to $u$ in $G$. Then
  \begin{shortlist}
    \item vertices $w_1, \ldots, w_{\ecc(u)-L}$ do not belong to \(P\);
    \item vertex $w_{\ecc(u)-L}$ has either no neighbor on \(P\), or its
    unique neighbor on \(P\) is an extremity at distance $L$ from $u$;
    \item if $\ecc(u)-L>1$ then vertices $w_1, \ldots, w_{\ecc(u)-L-1}$ have no neighbor on \(P\).
  \end{shortlist}
\end{lem}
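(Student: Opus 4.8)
The plan is to reduce all three items to two elementary distance observations and then read each claim off almost mechanically. Write $e=\ecc(u)$, let $a,b$ denote the two extremities of $P$, and recall $L=\max\{d(u,a),d(u,b)\}$ as in the statement. The first observation I would record is that, since the subpath $w_i - w_{i+1} - \ldots - w_{\ecc(u)+1}=u$ of a shortest $v$–$u$ path is itself shortest, we have $d(u,w_i)=e+1-i$ for every $i$. The second observation is that every vertex of $P$ lies within distance $L$ of $u$: because $P$ is a shortest path and $u$ lies on it, the farthest vertex of $P$ from $u$ is one of its extremities, hence at distance $\max\{d(u,a),d(u,b)\}=L$.

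Given these two facts, item~(i) is immediate: for $i\le e-L$ we have $d(u,w_i)=e+1-i\ge L+1>L$, so $w_i$ cannot be a vertex of $P$. For item~(iii) I would push the same inequality one step further: for $i\le e-L-1$ we get $d(u,w_i)\ge L+2$, whereas a neighbor $p$ of $w_i$ on $P$ would force $d(u,w_i)\le d(u,p)+1\le L+1$, a contradiction; hence none of $w_1,\ldots,w_{e-L-1}$ has a neighbor on $P$.

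Item~(ii) is the delicate one and is where I expect the only real care to be needed. Here $d(u,w_{e-L})=L+1$ exactly, so a neighbor $p$ of $w_{e-L}$ on $P$ satisfies $L+1\le d(u,p)+1\le L+1$, which forces $d(u,p)=L$; thus $p$ must be an extremity of $P$ realizing the maximum distance $L$ from $u$. The remaining subtlety is uniqueness: I would argue that $w_{e-L}$ cannot be adjacent to two such vertices, since the only candidates are the extremities $a$ and $b$, and their being simultaneously at distance $L$ and both adjacent to $w_{e-L}$ would give $d(a,b)\le 2$, contradicting $d(a,b)=D\ge 3$.

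The argument uses the diameter hypothesis $D\ge 3$ only in this last uniqueness step; everywhere else it is pure triangle inequality combined with the ``shortest subpath'' and ``farthest-on-$P$-is-an-extremity'' facts. The one boundary case I would verify explicitly is when $u$ is the exact midpoint of $P$, so that both extremities sit at distance $L$ from $u$, since that is the only configuration in which $w_{e-L}$ could a priori see two extremities; confirming that $D\ge 3$ excludes this completes item~(ii).
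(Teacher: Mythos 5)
Your proof is correct and follows essentially the same route as the paper's: all three items come from bounding $d(u,w_i)$ against the fact that every vertex of $P$ is within distance $L$ of $u$, with $D\ge 3$ invoked only to rule out adjacency to both extremities in item~(ii). The only cosmetic difference is that you exploit the exactness $d(u,w_i)=\ecc(u)+1-i$ along the shortest $v$--$u$ path, whereas the paper phrases the same contradictions as $d(u,v)\le \ecc(u)-1$; these are equivalent.
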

\begin{proof}
  No vertex $w_i$ with $1\leq i \leq \ecc(u)-L$ is on $P$, since this would imply $d(u,w_i)\leq L$, and hence $d(u,v)=d(u,w_1)\leq L+i-1\leq \ecc(u)-1$. Similarly, no vertex $w_i$ with $1\leq i \leq \ecc(u)-L-1$ has a neighbor on $P$, since this would imply $d(u,w_i)\leq L+1$, and hence $d(u,v)=d(u,w_1)\leq L+1+i-1\leq \ecc(u)-1$. If vertex $w_{\ecc(u)-L}$ has at least one neighbor on $P$, then this neighbor is necessarily an extremity of $P$ at distance $L$ from $u$, else we would have $d(u,w_{\ecc(u)-L})\leq L$, which would imply $d(u,v)=d(u,w_1)\leq L+(\ecc(u)-L-1)=\ecc(u)-1$. We conclude the proof by observing that if both extremities of $P$ are at distance $L$ from $u$, then  $w_{\ecc(u)-L}$ is adjacent to at most one of them since $D\geq 3$.
	\end{proof}
	

Let $n, D$ and $k$ be integers such that $n \ge 4$, $3 \le D
\le n-1$ and $0 \le k \le n-D-1$, and let $\extG{n}{D}{k}$ be
the graph (of order $n$ and diameter $D$) constructed from a path
$u_0-u_1-\ldots-u_D$ by joining each vertex of a clique $\K{n-D-1}$
to $u_0$ and $u_1$, and $k$ vertices of the clique to $u_2$ (see
Figure~\ref{fig:specialcases}). Observe that $\extG{n}{D}{0}$ is the
lollipop $L_{n, D}$ and that $\extG{n}{D}{n-D-1}$ can be viewed as a
lollipop with a missing edge between $u_0$ and $u_2$. Also, if $D = n-1$,
then $k = 0$ and $\extG{n}{n-1}{0} \simeq \Path{n}$.

\begin{figure}[!h]
  \label{fig:extr_graphs}
\end{figure}

\begin{lem}
  \label{lem:eci_for_extremal_graph}
  Let $n, D$ and $k$ be integers such that $n \ge 4$, $3 \le D \le n-1$
  and $0 \le k \le n-D-1$, then
  \begin{align*}
    \eci(\extG{n}{D}{k}) ={}&
      2\sum_{i=0}^{D-1}\max\{i,D-i\}+\Big(n-D-1\Big)\Big( 2D-1+D(n-D)\Big) \\
    & +k \Big(2D-n-1+\max\{2,D-2\}\Big).
  \end{align*}
\end{lem}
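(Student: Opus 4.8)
The plan is to evaluate $\eci(\extG{n}{D}{k})=\sum_{v}\W(v)=\sum_{v}\dg(v)\ecc(v)$ by determining the degree and eccentricity of each vertex separately. Write $u_0,u_1,\ldots,u_D$ for the path vertices and call a clique vertex \emph{special} if it is joined to $u_2$ (there are $k$ of them) and \emph{ordinary} otherwise. The degrees are read straight off the construction: $\dg(u_0)=n-D$, $\dg(u_1)=n-D+1$, $\dg(u_2)=2+k$, $\dg(u_i)=2$ for $3\le i\le D-1$, and $\dg(u_D)=1$; an ordinary clique vertex has degree $n-D$ and a special one has degree $n-D+1$.

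The substantive step is to compute the eccentricities, which amounts to checking that the extra edges incident to the clique create no useful shortcuts. First I would confirm $\dist(u_0,u_D)=D$, so that the diameter is $D$ as claimed: every $u_0$--$u_D$ path must pass through $u_2,u_3,\ldots,u_{D-1}$, and the only alternative to the subpath $u_0-u_1-u_2$ is a two-step detour $u_0-c-u_2$ through a special vertex $c$, which is no shorter. The same argument gives $\dist(u_i,u_j)=|i-j|$ for all $i,j$, and since every $u_i$ reaches each clique vertex in at most $\max\{i,D-i\}$ steps, one obtains $\ecc(u_i)=\max\{i,D-i\}$; the only borderline value is $\ecc(u_2)=\max\{2,D-2\}$, precisely the quantity occurring in the coefficient of $k$. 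For clique vertices, an ordinary vertex lies at distance $D$ from $u_D$ (and at most $2$ from all other vertices), so its eccentricity is $D$, while a special vertex reaches $u_D$ through $u_2$ in $D-1$ steps and thus has eccentricity $D-1$.

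With all degrees and eccentricities known, I would first settle the case $k=0$ (the lollipop $L_{n,D}$) by summing the weights in three blocks: the interior path vertices $u_2,\ldots,u_{D-1}$ contribute $\sum_{i=2}^{D-1}2\max\{i,D-i\}$; the vertices $u_0,u_1,u_D$ contribute $(n-D+1)(2D-1)$; and the $n-D-1$ ordinary clique vertices contribute $(n-D-1)D(n-D)$. A short rearrangement, using $(n-D+1)(2D-1)=2(2D-1)+(n-D-1)(2D-1)$ and $2(2D-1)=2\big(\max\{0,D\}+\max\{1,D-1\}\big)$, folds the first two blocks into $2\sum_{i=0}^{D-1}\max\{i,D-i\}+(n-D-1)(2D-1)$ and merges the leftover factor with the clique block to give the middle term of the statement. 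For general $k$, rather than redo the whole sum I would track the effect of converting one ordinary clique vertex into a special one: its weight changes from $D(n-D)$ to $(D-1)(n-D+1)$, a net change of $2D-n-1$, and the single new edge raises $\W(u_2)$ by $\ecc(u_2)=\max\{2,D-2\}$; summing these contributions over the $k$ conversions yields the stated $k\big(2D-n-1+\max\{2,D-2\}\big)$.

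I expect the main obstacle to be bookkeeping in the degenerate regimes rather than anything conceptual. One must check that the eccentricity and counting arguments still hold when the generic picture collapses: $D=3$, where $\max\{2,D-2\}=2$ and the interior stretch $u_3,\ldots,u_{D-1}$ is empty; $D=n-1$, where the clique is empty, $k=0$, and $\extG{n}{n-1}{0}\simeq\Path{n}$, so the formula must reduce to $\eci(\Path{n})$; and $k=0$, where no detour through a special vertex exists but the distances are nevertheless unchanged. The care needed is mainly to ensure that no step of the simplification silently assumes $D\ge 4$ or $k\ge 1$, and that $\max\{i,D-i\}$ genuinely dominates each vertex's distance to the clique in every range of $i$.
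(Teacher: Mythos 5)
Your proposal is correct and follows essentially the same route as the paper: both determine the degree and eccentricity of every vertex of $\extG{n}{D}{k}$ (identical values in both cases) and then sum the weights $\W(v)=\dg(v)\ecc(v)$. The only difference is organizational --- the paper sums the weights directly for general $k$, whereas you sum for $k=0$ and then account for the change $2D-n-1+\max\{2,D-2\}$ per clique vertex joined to $u_2$ --- which is an equivalent bookkeeping of the same computation.
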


\begin{proof}
The sum of the weights of the vertices outside $P$ is
\begin{align*}
  \sum_{v \in V \setminus V(P)} \W(v) &= k \left(
    n-D+1\right)\left(D-1\right) + (n-D-1-k) \left(n-D\right) D,\\
  &= k \left( 2D-n-1 \right) + (n-D-1)(n-D)D.\\
\end{align*}
We now consider the weights of the vertices in $P$. The weight of $u_0$ is $D(n-D)$, the weight of $u_1$ is $(D-1)(n-D+1)$, and the weight of $u_2$ is 
$(k+2)\max\{2,D-2\}$. The weight of $u_i$ for $i=3,\ldots,D-1$ is $2\max\{i,D-i\}$, and the weight of $u_D$ is $D$.
Hence, the total weight of the vertices on $P$ is
\begin{align*}
   & (n-D)D+(n-D+1)(D-1)+(k+2)\max\{2,D-2\}+2\sum_{i=3}^{D-1}\max\{i,D-i\}+D\\
={}& \Big((n-D-1)D+D\Big)+\Big((n-D-1)(D-1)+2(D-1)\Big)\\
   & +\Big(k\max\{2,D-2\}+2\max\{2,D-2\}\Big)+2\sum_{i=3}^{D-1}\max\{i,D-i\}+D\\
={}& 2\sum_{i=0}^{D-1}\max\{i,D-i\}+(n-D-1)(2D-1)+k\max\{2,D-2\}
\end{align*}
By summing up all weight in $G$, we obtain the desired result.
\end{proof}

In what follows, we denote $\displaystyle f(n,D)=\max_{k=0}^{n-D-1}\eci(\extG{n}{D}{k})$. It follows from the above lemma that
\[
f(n,D)=
\begin{cases}
14+\Big(n-4\Big)\Big(3n-4+\max\{0,2D-n+1\}\Big)&\text{if }D=3;\\[3.5ex]
\begin{split}
&2\sum_{i=0}^{D-1}\max\{i,D-i\}\\
&\quad+\Big(n-D-1\Big)\Big( 2D-1+D(n-D)+\max\{0,3D-n-3\}\Big)
\end{split} &
\text{if }D\geq 4.
\end{cases}
\]
Lemma~\ref{lem:eci_for_extremal_graph} allows to know for which values of $k$
we have $ \eci(\extG{n}{D}{k})=f(n,D)$.

\begin{cor}
  Let $n$ and $k$ be integers such that $n \ge 4$,
  and $0 \le k \le n-4$. 
  \begin{shortlist}
    \item If $n < 7$, then $\eci(\extG{n}{3}{k}) \le
      f(n,3)=2n^2\!-\!5n+2$, with equality if and only if  $k = n-4$.
    \item If $n > 7$, then $\eci(\extG{n}{3}{k}) \le
      f(n,3)=3n^2-16n+30$ with equality if and only if $k=0$.
    \item If $n = 7$, then all $\eci(\extG{n}{3}{k})$
      are equal to $65$ for $k = 0, \ldots, n-4$.
  \end{shortlist}
\end{cor}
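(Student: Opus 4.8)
The plan is to specialize the closed-form expression from Lemma~\ref{lem:eci_for_extremal_graph} to the case $D=3$ and then read off the dependence on $k$. Setting $D=3$, the term $\max\{2,D-2\}$ collapses to $\max\{2,1\}=2$, so the coefficient of $k$, namely $2D-n-1+\max\{2,D-2\}$, simplifies to $6-n-1+2=7-n$. First I would evaluate the two $k$-independent terms: the sum $2\sum_{i=0}^{2}\max\{i,3-i\}=2(3+2+2)=14$, and the product $(n-D-1)(2D-1+D(n-D))=(n-4)(5+3(n-3))=(n-4)(3n-4)$. This yields the single affine expression
\[
\eci(\extG{n}{3}{k})=14+(n-4)(3n-4)+k(7-n).
\]

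Next, since this quantity is affine in $k$ on the integer interval $0\le k\le n-4$, its maximum over $k$ is governed entirely by the sign of the slope $7-n$, and I would split into the three stated cases accordingly. When $n<7$ the slope is positive, so $\eci$ is strictly increasing in $k$ and is maximized at the endpoint $k=n-4$; substituting and simplifying gives $14+(n-4)\big((3n-4)+(7-n)\big)=14+(n-4)(2n+3)=2n^2-5n+2$, with the maximum attained uniquely at $k=n-4$. When $n>7$ the slope is negative, so $\eci$ is strictly decreasing and is maximized at $k=0$, giving $14+(n-4)(3n-4)=3n^2-16n+30$, attained uniquely at $k=0$. When $n=7$ the slope vanishes, so $\eci$ is constant over all admissible $k$, with common value $14+(7-4)(3\cdot 7-4)=14+51=65$. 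In each case the resulting value agrees with $f(n,3)$ as defined after Lemma~\ref{lem:eci_for_extremal_graph} (where the term $\max\{0,2D-n+1\}=\max\{0,7-n\}$ reproduces exactly this case split), so the claimed identification with $f(n,3)$ is immediate.

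There is essentially no substantive obstacle here: the whole argument is a specialization of the lemma followed by a sign analysis of a linear function on an integer interval. The only care needed is the bookkeeping of the $\max$ terms under $D=3$ and the final algebraic simplifications of $(n-4)(2n+3)$ and $(n-4)(3n-4)$. The uniqueness assertions in the first two items follow directly from the strict monotonicity of an affine function with nonzero slope, and the equality statement in the third item follows from the slope being zero.
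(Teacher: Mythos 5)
Your proposal is correct and follows exactly the route the paper intends: the corollary is stated as an immediate consequence of Lemma~\ref{lem:eci_for_extremal_graph}, namely specializing to $D=3$ to get $\eci(\extG{n}{3}{k})=14+(n-4)(3n-4)+k(7-n)$ and reading off the maximizing $k$ from the sign of the slope $7-n$. Your algebraic simplifications ($2n^2-5n+2$, $3n^2-16n+30$, and the constant value $65$ at $n=7$) all check out and agree with the paper's formula for $f(n,3)$.
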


\begin{cor} \label{cor:max_k}
  Let $n, D$ and $k$ be integers such that $n \ge 5$, $4 \le D \le n-1$ 
  and $0 \le k \le n-D-1$.
  \begin{shortlist}
    \item If $n < 3(D-1)$, then $\eci(\extG{n}{D}{k})=f(n,D)$ if and only
      if  $k = n-D-1$.
    \item If $n > 3(D-1)$, then $\eci(\extG{n}{D}{k})=f(n,D)$ if and only
      if $k=0$.
    \item If $n = 3(D-1)$, then $\eci(\extG{n}{D}{k})=f(n,D)$ if and only
      if $k\in\{ 0, \ldots, n-D-1\}$.
  \end{shortlist}
\end{cor}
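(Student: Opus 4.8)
The plan is to exploit the fact that, by Lemma~\ref{lem:eci_for_extremal_graph}, the quantity $\eci(\extG{n}{D}{k})$ is an \emph{affine} function of $k$. The first two summands of the formula in that lemma do not involve $k$, so the entire $k$-dependence lives in the single term $k\big(2D-n-1+\max\{2,D-2\}\big)$. The corollary will therefore reduce to reading off the sign of the coefficient of $k$ and deciding whether this linear function is strictly increasing, strictly decreasing, or constant on the admissible range $0\le k\le n-D-1$.

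First I would simplify the coefficient of $k$. Since the hypothesis is $D\geq 4$, we have $D-2\geq 2$, hence $\max\{2,D-2\}=D-2$, and the coefficient becomes
\[
2D-n-1+(D-2)=3D-3-n=3(D-1)-n.
\]
This is the only point at which the assumption $D\geq 4$ is used, and it is precisely what makes the coefficient collapse to the clean threshold $3(D-1)-n$ that governs all three cases of the statement.

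Then I would split into the three sign regimes and invoke $f(n,D)=\max_{k}\eci(\extG{n}{D}{k})$ directly. If $n<3(D-1)$, the coefficient $3(D-1)-n$ is strictly positive, so $\eci(\extG{n}{D}{k})$ is strictly increasing in $k$ and is maximized uniquely at $k=n-D-1$. If $n>3(D-1)$, the coefficient is strictly negative, so the function is strictly decreasing and is maximized uniquely at $k=0$. Finally, if $n=3(D-1)$, the coefficient vanishes, so $\eci(\extG{n}{D}{k})$ does not depend on $k$ and every $k\in\{0,\ldots,n-D-1\}$ attains the common value $f(n,D)$. In each case the ``if and only if'' conclusion about which $k$ realize the maximum is immediate from strict monotonicity (respectively from constancy).

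I do not expect any genuine obstacle here: once Lemma~\ref{lem:eci_for_extremal_graph} is available, the corollary is a one-line sign analysis of a linear function of $k$. The only point requiring a moment of care is checking that the admissible range for $k$ is nonempty, i.e.\ that $n-D-1\geq 0$ under the stated hypotheses (which holds since $D\leq n-1$), so that both endpoints $k=0$ and $k=n-D-1$ are legitimate choices and the three cases genuinely partition all possibilities.
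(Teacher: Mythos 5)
Your proof is correct and follows exactly the route the paper intends: the paper states this corollary without a separate proof, as an immediate consequence of Lemma~\ref{lem:eci_for_extremal_graph}, and its formula for $f(n,D)$ with the term $\max\{0,3D-n-3\}$ encodes precisely your sign analysis of the linear coefficient $2D-n-1+(D-2)=3(D-1)-n$. Nothing is missing; the monotonicity-in-$k$ argument and the three sign regimes are the same.
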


The graph $H_2$ of Figure \ref{fig:specialcases} has $6$ vertices, diameter $3$, and is not isomorphic to $\extG{6}{3}{k}$, while  $\eci(H_2)=f(6,3)=44$. Similarly, the graph $H_3$ of Figure \ref{fig:specialcases} has $7$ vertices, diameter $3$, and is not isomorphic to $\extG{7}{3}{k}$, while  $\eci(H_3)=f(7,3)=65$. In what follows, we prove that all graphs $G$ of order $n$ and diameter $D\geq 3$ have $\eci(G)\leq f(n,D)$. Moreover, we show that if $G$ is not isomorphic to a $\extG{n}{D}{k}$, then equality can only occur if $G\simeq H_2$ or $G \simeq H_3$. So, for every $n\geq 4$ and $3\leq D\leq n-1$, let us consider the following graph class $\mathcal{C}_n^D$:
\[
\mathcal{C}_n^D=
\begin{cases}
  \{\extG{n}{3}{n-4}\} & \text{if } n=4,5  \text{ and } D=3 ;\\
  \{\extG{n}{3}{2},H_2\} & \text{if } n=6  \text{ and } D=3 ;\\
  \{\extG{n}{3}{0},\ldots,\extG{n}{3}{3},H_3\} & \text{if } n=7  \text{ and }  D=3 ;\\
  \{\extG{n}{3}{0}\} & \text{if } n>7  \text{ and }  D=3 ;\\
  \{\extG{n}{D}{n-D-1}\} & \text{if } n<3(D-1) \text{ and }  D\geq 4 ;\\
  \{\extG{n}{D}{0},\ldots,\extG{n}{D}{n-D-1}\} & \text{if } n=3(D-1)  \text{ and }  D\geq 4 ;\\
  \{\extG{n}{D}{0}\}  & \text{if } n > 3(D-1)  \text{ and }  D\geq 4 .
\end{cases}
\]
Note that while Morgan \etal~\cite{Morgan2011} state that the lollipops
reach the asymptotic upper bound of the eccentric connectivity index, we
will prove that they reach the more precise upper bound only if $D=n-1$,
$D=3$ and $n\geq 7$, or $D\geq 4$ and $n\geq 3(D-1)$.
\begin{thm} \label{thm:ub_diam}
  Let \(G\) be a connected graph of order \(n \ge 4\) and diameter
  \(3 \le D \le n-1\). Then $\eci(G) \le f(n,D)$, with equality if and only if $G$ belongs to $\mathcal{C}_n^D$.
\end{thm}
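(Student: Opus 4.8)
The plan is to fix a diametral path $P = u_0 - u_1 - \cdots - u_D$ and to write $\eci(G) = \sum_{v \in V(P)} \W(v) + \sum_{v \in Y} \W(v)$ with $Y := V \setminus V(P)$, $|Y| = n-D-1$, then to bound the two sums against their values in the extremal graphs $\extG{n}{D}{k}$. The basic quantitative tool I would use is the elementary inequality $\dg(v) \le n - \ecc(v)$, which holds for every vertex: a breadth-first search from $v$ yields $\ecc(v)+1$ nonempty layers, the first containing $\dg(v)$ vertices and each of the remaining $\ecc(v)-1$ layers at distance at least $2$ containing at least one vertex, so $n \ge \dg(v) + \ecc(v)$. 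Hence $\W(v) \le (n-\ecc(v))\ecc(v)$; combined with $\ecc(u_i) \ge \max\{i, D-i\}$, which follows from $\dist(u_i,u_0)=i$ and $\dist(u_i,u_D)=D-i$, this lets me compare each weight to the one it takes in $\extG{n}{D}{k}$.

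The core of the argument is to prove that this comparison never beats the extremal value. First I would treat the vertices of $Y$: since $e \mapsto (n-e)e$ increases on $[0,n/2]$ and decreases afterwards, a short case distinction on whether $D \le n/2$ shows that every $y \in Y$ satisfies $\W(y) \le \max\{D(n-D),\,(D-1)(n-D+1)\}$, the two values attained in $\extG{n}{D}{k}$ by clique vertices that are, respectively, not adjacent and adjacent to $u_2$. Here the role of the clique is that it is the only way to make the degrees of the $Y$-vertices as large as $n-\ecc$ allows simultaneously. In parallel I would show that the interior vertices $u_3, \ldots, u_{D-1}$ cannot contribute more than $2\sum_{i=3}^{D-1}\max\{i,D-i\}$ beyond a controlled correction: any extra neighbor of such a vertex shortens distances and is more than paid back in the eccentricity terms, so the degree-two path realized in $\extG{n}{D}{k}$ is optimal. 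Adding the contributions of $u_0$ and $u_1$, which already meet $\dg(v)=n-\ecc(v)$ with equality, I obtain $\eci(G) \le \eci(\extG{n}{D}{k})$ for the appropriate $k$, after which Lemma~\ref{lem:eci_for_extremal_graph} together with Corollary~\ref{cor:max_k} (and its $D=3$ analogue) identifies $f(n,D)$ and the admissible values of $k$.

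A separate case is needed when some path vertex $u$ has eccentricity strictly larger than its distance $L$ to the farther extremity of $P$; here Lemma~\ref{lem:outsiders_connections_to_P} is decisive. It shows that realizing such an eccentricity forces a pendant sub-path of length $\ecc(u)-L$ whose first vertices lie in $Y$ and have essentially no neighbors on $P$; these vertices therefore have degree close to $2$ and weight far below $D(n-D)$, so spending $Y$-vertices to inflate a path eccentricity is strictly wasteful. This reduces every graph to the case where each $u_i$ has eccentricity exactly $\max\{i,D-i\}$ and all of $Y$ forms a clique attached to $u_0, u_1$ and, for $k$ of them, to $u_2$, that is, to an $\extG{n}{D}{k}$ up to the two sporadic optima $H_2$ and $H_3$.

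The equality analysis is where the main difficulty lies, and it is also where the graphs $H_2$ ($n=6$, $D=3$) and $H_3$ ($n=7$, $D=3$) must be extracted. Tracing back the inequalities, equality forces all $Y$-vertices to have weight exactly $\max\{D(n-D),(D-1)(n-D+1)\}$ and hence to form a clique attached precisely to $\{u_0,u_1\}$ or $\{u_0,u_1,u_2\}$, the interior path vertices to have degree two, and $\ecc(u_i)=\max\{i,D-i\}$ throughout. For $D \ge 4$ this pins $G$ down to an $\extG{n}{D}{k}$ with $k$ in the range dictated by the sign of $3(D-1)-n$, matching $\mathcal{C}_n^D$. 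The delicate part is $D=3$ with small $n$: there the interior of the path is too short for the generic rigidity to apply, the comparison $e \mapsto (n-e)e$ is nearly flat, and alternative attachments of the $Y$-vertices can tie the extremal value; checking these residual configurations by hand for $n=6,7$ produces exactly $H_2$ and $H_3$, while for $n>7$ only the lollipop survives. The main obstacle throughout is controlling the regime $D > n/2$, where $(n-e)e$ is decreasing on the relevant range so that lowering an eccentricity can a priori raise a weight; the key is that the two diametral endpoints and the induced long path forbid too many vertices from sitting simultaneously at the favorable eccentricity, which is precisely what the degree bound $\dg(v)\le n-\ecc(v)$ and Lemma~\ref{lem:outsiders_connections_to_P} jointly rule out.
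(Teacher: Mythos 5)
Your outline mirrors the paper's high-level strategy (fix a diametral path $P$, split $\eci(G)$ into path and non-path contributions, use Lemma~\ref{lem:outsiders_connections_to_P} to penalize inflated eccentricities), but the two steps that carry all of the quantitative weight are not proved, and the justification you give for the first one fails. Your bound for the vertices $y\in Y$ rests on $\dg(y)\le n-\ecc(y)$ together with the shape of $e\mapsto(n-e)e$; this yields $\W(y)\le D(n-D)$ only when $D\le n/2$. When $D>n/2$ (which is allowed, up to $D=n-1$), the parabola attains its maximum inside the feasible range of eccentricities, so your tools only give $\W(y)\le n^2/4$ roughly, which can far exceed $\max\{D(n-D),(D-1)(n-D+1)\}$: for $n=12$, $D=10$ the claimed bound is $27$ while the parabola allows $36$. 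The fact that saves the statement --- and which you never invoke --- is that a vertex off a shortest path has at most three neighbors on it, hence $\dg(y)\le n-D+1$; and even then, a vertex with three neighbors on $P$ and eccentricity exactly $D$ could a priori have weight $D(n-D+1)$, exceeding both target values. Ruling this out (or compensating for it) is precisely the paper's analysis of the class $V_3^D$, where a vertex $w^*$ at distance $D$ from such a $v$, necessarily off $P$ and with at most one neighbor on $P$, is used to recover $2D$ units elsewhere. You acknowledge the regime $D>n/2$ as ``the main obstacle'' but then dismiss it by appealing to the degree bound and Lemma~\ref{lem:outsiders_connections_to_P}, neither of which addresses it: the degree bound is exactly what permits the problematic weights, and the lemma concerns path vertices whose eccentricity exceeds their distance to the ends of $P$, not non-path vertices combining small eccentricity with large degree.

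The second unproved step is the claim that the interior vertices $u_3,\dots,u_{D-1}$ contribute at most $2\sum_{i=3}^{D-1}\max\{i,D-i\}$ because any extra neighbor is ``more than paid back in the eccentricity terms''. A per-vertex argument cannot establish this: the individual maxima are not simultaneously achievable, and the true optimum is decided by a global trade-off --- attaching a clique vertex to $u_2$ changes that vertex's weight by $2D-n-1$ while adding $\max\{2,D-2\}$ to the weight of $u_2$, so the sign of $3D-n-3$ determines the optimal $k$ (this is Corollary~\ref{cor:max_k}, and it is why the extremal family switches at $n=3(D-1)$). The paper resolves this with a global accounting that your sketch has no substitute for: it partitions the non-path vertices by their number of neighbors on $P$ and their eccentricity, bounds the weighted sum $\sum_i \delta_i o_i$ using constraints such as $o_0+o_D\le n-D-1$, and offsets any excess eccentricity $r_i$ of a path vertex against the forced low-degree vertices in $Y$ supplied by Lemma~\ref{lem:outsiders_connections_to_P} (the $A_1+A_2$ decomposition with $A_2\le 0$). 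Without this bookkeeping, your outline does not establish $\eci(G)\le f(n,D)$ even as an inequality, and the equality analysis that must isolate $\extG{n}{D}{k}$, $H_2$ and $H_3$ cannot begin.
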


\begin{proof}
We have already observed that all graphs $G$ in $\mathcal{C}_n^D$ have $\eci(G)= f(n,D)$. So let $G$ be a graph of order $n$, diameter $D$ such that $\eci(G)\geq f(n,D)$. It remains to prove that $G$ belongs to $\mathcal{C}_n^D$.

Let \(P=u_0 - u_1 - \dots - u_D\) be a shortest path in \(G\) that connects two vertices $u_0$ and $u_D$ at distance $D$ from each other. In what follows, we use the following notations for all $i=0,\ldots,D$:
\begin{shortlist}
	\item \(o_i\) is the number of vertices outside $P$ and adjacent to $u_i$;
	\item \(\delta_i = \max\{i, D-i\}\);
	\item $r_i=\ecc(u_i)-\delta_i$.
\end{shortlist}
Also, let $r^*=\max_{i=1}^{D-1}r_i$. Note that $\delta_i\geq 2$ and $r_i\leq \lfloor\frac{D}{2}\rfloor$ for all $i$, and $r_0=r_D=0$ since $\ecc(u_0)=\ecc(u_D)=\delta_0=\delta_D=D$. Since $P$ is a shortest path linking $u_0$ to $u_D$, no vertex  outside $P$ can have more than three neighbors in $P$. We consider the following partition of the vertices outside $P$ in 4 disjoint sets $V_0,V_{1,2},V_3^{D-1},V_3^D$, and denote by $n_0,n_{1,2},n_3^{D-1},n_3^D$ their respective size:
\begin{shortlist}
	\item $V_0$ is the set of vertices outside $P$ with no neighbor on $P$;
	\item $V_{1,2}$ is the set of vertices outside $P$ with one or two neighbors in $P$;
	\item $V_3^{D-1}$ is the set of vertices $v$ outside $P$ with three neighbors in $P$ and $\ecc(v)\leq D-1$;
	\item $V_3^{D}$ is the set of vertices $v$ outside $P$ with three neighbors in $P$ and $\ecc(v)=D$.
\end{shortlist}

\vspace{0.2cm}Clearly, all vertices $v$ outside $P$ can have $\ecc(v)=D$ except those in $V_3^{D-1}$. The maximum degree of a vertex in $V_0$ is $n-D-2$, while it is $n-D$ for those in $V_{1,2}$ and $n-D+1$ for those in $V_3^{D-1}\cup V_3^{D}$.
For a vertex $v\in V_{1,2} \cup V_3^{D-1} \cup V_3^{D}$, let 
\begin{gather*}
  \rho(v)=\max\{r_i \;|\; u_i \text{ is adjacent to }v\}\\
  \rho^*=\max_{v\in V_{1,2} \cup V_3^{D-1} \cup V_3^{D}}\rho(v)
\end{gather*}
Hence, $r^*\geq \rho^*$.
We first show that the total weight of the vertices in $V_0\cup V_{1,2}$ is at most 
\begin{align*}
     &D(n-D)(n-D-1-n_3^{D-1}-n_3^{D})-2Dr^*+D\min\{1, \rho^*\}.
\end{align*}
\begin{itemize}
	\item If $r^*=0$, then the largest possible weight of the vertices in $V_0\cup V_{1,2}$ occurs when all of them have two neighbors in $P$ (i.e., $n_0=0$ and no vertex in $V_{1,2}$ has one neighbor on $P$). In such a case, $n_0+n_{1,2}=n-D-1-n_3^{D-1}-n_3^{D}$, and all these vertices have degree $n-D$. Hence, their total weight is at most $D(n-D)(n-D-1-n_3^{D-1}-n_3^{D})$.
	\item If $r^*>0$ and $\rho^*>0$, then let $i$ be such that $r_i=r^*$. It follows from Lemma \ref{lem:outsiders_connections_to_P} that there is a path $w_1 -\ldots - w_{\ecc(u_i)+1}$ such that $w_1,\ldots,w_{r^*-1}$ have no neighbor on $P$ and $w_{r^*}$ has at most one neighbor on $P$. Hence, the largest possible weight of the vertices in $V_0\cup V_{1,2}$ occurs when $r^*-1$ vertices have 0 neighbor on $P$, one vertex has one neighbor on $P$, and $n-D-1-n_3^{D-1}-n_3^{D}-r^*$ vertices have 2 neighbors in $P$. Hence, the largest possible weight for the vertices in $V_0\cup V_{1,2}$ is
	\begin{align*}
	   & D(n-D-2)(r^*-1)+D(n-D-1)+D(n-D)(n-D-1-n_3^{D-1}-n_3^{D}-r^*)\\
	={}& D(n-D)(n-D-1-n_3^{D-1}-n_3^{D})-2Dr^*+D.
	\end{align*}
	\item If $r^*>0$ and $\rho^*=0$, then consider the same path $w_1 -\ldots - w_{\ecc(u_i)+1}$ as in the above case. If $w_{r^*}$ has no neighbor on $P$, then there are at least $r^*$ vertices with no neighbor on $P$ and the largest possible weight for the vertices in $V_0\cup V_{1,2}$ is
	\[
	\begin{array}{l}
	D(n-D-2)(r^*)+D(n-D)(n-D-1-n_3^{D-1}-n_3^{D}-r^*)\\
	=D(n-D)(n-D-1-n_3^{D-1}-n_3^{D})-2Dr^*.
	\end{array}
	\]
	Also, if there are at least two vertices in $V_{1,2}$ with only one neighbor on $P$, then the largest possible weight for the vertices in $V_0\cup V_{1,2}$ is
	\[
	\begin{array}{l}
	D(n-D-2)(r^*-1)+2D(n-D-1)+D(n-D)(n-D-1-n_3^{D-1}-n_3^{D}-r^*-1)\\
	=D(n-D)(n-D-1-n_3^{D-1}-n_3^{D})-2Dr^*.
	\end{array}
	\]
	So assume $w_{r^*}$ is the only vertex in $V_{1,2}$ with only one neighbor on $P$. We thus have $\dist(u_i,w_{r^*})=\delta_i+1$. We now show that this case is impossible. We know from Lemma \ref{lem:outsiders_connections_to_P} that $w_{r^*}$ is adjacent to $u_0$ or (exclusive) to $u_D$. Since $\rho(v)=0$ for all vertices $v$ outside $P$, we know that $u_i$ has no neighbor outside $P$. Hence, $w_{\ecc(u_i)}$ is $u_{i-1}$ or $u_{i+1}$, say $u_{i+1}$ (the other case is similar). Then $w_{r^*}$ is not adjacent to $u_0$ else there is $j$ with $r^*+1\leq j \leq \ecc(u_i)-1$ such that $w_j$ is outside $P$ and has $w_{j+1}$ as neighbor on $P$, and since $w_j$ must have a second neighbor $u_{\ell}$ on $P$ with $\ell\geq i+2$, we would have  
	\[
	i+2\leq \ell=\dist(u_0,u_{\ell})\leq \dist(w_{r^*},w_j)+2 \leq (\dist(w_{r^*},u_i)-2)+2=i+1.
	\]
	
	Hence, $w_{r^*}$ is adjacent to $u_D$. Then there is also a path linking $u_i$ to $w_1$ going through $u_{i-1}$ else $d(u_0,w_1) = d(u_0,u_i)+d(u_i,w_1)>i+\delta_i\geq D$. Let $Q$ be such a path of minimum length. Clearly, $Q$ has length at least equal to $\ecc(u_i)$. So let $w'_1 - \ldots - w'_{\ecc(u_i)+1}$ be the subpath of $Q$ of length $\ecc(u_i)$ and having $u_i$ as extremity (i.e., $w'_{\ecc(u_i)}=u_{i-1}$ and $w'_{\ecc(u_i)+1}=u_{i}$). Applying the same argument to $w'_{r^*}$ as was done for $w_{r^*}$, we conclude that $w'_{r^*}$ has $u_0$ as unique neighbor on $P$. We thus have two vertices in  $V_{1,2}$ with a unique neighbor on $P$, a contradiction.
\end{itemize}

The total weight of the vertices in $V_3^{D-1}\cup V_3^{D}$ is at most 
$(n-D+1)\Big((D-1)n_3^{D-1}+Dn_3^{D}\Big)$, which gives the following upper bound $B$ on the total weight of the vertices outside $P$:
\begin{align*}
  B={}& D(n-D)(n-D-1-n_3^{D-1}-n_3^{D})+(n-D+1)\Big((D-1)n_3^{D-1}+Dn_3^{D}\Big)\\
      &-2Dr^*+D\min\{1,\rho^*\}\\
   ={}&(n-D-1)D(n-D)+n_3^{D-1}(2D-n-1)+Dn_3^{D}-2Dr^*+D\min\{1,\rho^*\}.
\end{align*}
This bound can only be reached if all vertices outside $P$ are pairwise adjacent. But Lemma~\ref{lem:outsiders_connections_to_P} shows that this cannot happen if $\rho^*>0$. Indeed, consider a vertex $v$ in $V_{1,2}\cup V_{3}^{D}\cup V_{3}^{D-1}$ with $\rho(v)>0$. There is  a vertex $u_i$ in $P$ adjacent to $v$ such  that $\rho(v)=r_i=\ecc(u_i)-\delta_i>0$. We know from Lemma \ref{lem:outsiders_connections_to_P} that there is a shortest path $w_1 - w_2 -\ldots - w_{\ecc(u_i)+1}=u_i$ linking $u_i$ to a vertex $w_1$ with $d(u_i,w_1)=\ecc(u_i)$ and such that $w_1, \ldots, w_{\rho(v)}$ do not belong to \(P\). In what follows, we denote $Q^v$ such a path. If $v$ is adjacent to a $w_j$ with $1\leq j\leq \rho(v)$, then the path $u_i - v - w_j - \ldots - w_1$ links $u_i$ to $w_1$ and has length at most $\rho(v)+1<r_i+\delta_i=\ecc(u_i)$, a contradiction. Hence $v$ has at least $\rho(v)$ non-neighbors outside $P$. Also, as shown in Lemma \ref{lem:outsiders_connections_to_P}, $w_1,\ldots, w_{\rho(v)-1}$ belong to $V_0$, while $w_{\rho(v)}$ belongs to $V_0\cup V_{1,2}$. In the upper bound $B$, we have assumed that $\ecc(w_1)=\ldots=\ecc(w_{\rho(v)})=D$. Hence, if $v\in V_{1,2}\cup V_{3}^{D}$, we can gain $2D$ units on $B$ for every $w_j$, $j=1,\ldots,\rho(v)$ ($D$ for $v$ and $D$ for $w_j$), while the gain is $2D-1$ ($D-1$ for $v$ and $D$ for $w_j$) if $v\in V_{3}^{D-1}$. 

We can gain an additional $2D$ for every $v\in V_{3}^{D}$. Indeed, consider such a vertex $v$ and let $w^*$ be a vertex at distance $D$ from $v$. Note that $w^*$ is not on $P$ and has at most one neighbor on $P$ else $d(v,w^*)\leq D-1$. Hence, if $\rho(v)=0$, we can gain $2D$ (one $D$ for $v$ and one $D$ for $w$) in the above upper bound. So assume $\rho(v)>0$, and consider again the shortest path $Q^v=w_1 - w_2 -\ldots - w_{\ecc(u_i)+1}=u_i$, with $\rho(v)=r_i$. Also, let $W=\{w_1,\ldots,w_{\rho(v)}\}$. To gain an additional $2D$, it is sufficient to determine a vertex in $(V_0\cup V_{1,2})\setminus W$ which is not adjacent to $v$. So assume no such vertex exists, and let us prove that such a situation cannot occur. Note that $w^*\notin V_{3}^{D}\cup V_{3}^{D-1}$ (since it has at most one neighbor on $P$), which implies
$w^*\in W$.

	\begin{itemize}
		\item If a vertex $w_j\in W$ has a neighbor $x\in V_0\cup V_{1,2}$ outside $W$, then $v$ is adjacent to $x$, and the path $v - x - w_j - \ldots - w^*$ has length at most $1+\rho(v)\leq 1+\lfloor\frac{D}{2}\rfloor<D$, a contradiction. 		
		\item If a vertex $w_j\in W$ has a neighbor $x\in V_3^D\cup V_3^{D-1}$, then $d(u_i,w_1)\leq d(u_i,x)+d(x,w_1)\leq \delta_i-1+r_i<\ecc(u_i)$, a contradiction. 		
	\end{itemize}
	Since $G$ is connected and $w_1,\ldots,w_{\rho(v)-1}$ have no neighbors outside $Q^v$, we know that $w_{\rho(v)}$ is adjacent to the extremity of $P$ at distance $\delta_i$ from $u_i$ (and to no other vertex on $P$). Hence, the vertices on $P$ and those in $W$ induce a path of length $D+\rho(v)>D$ in $G$, a contradiction. 
	
In summary, the following value is a more precise upper bound on the total weight of the vertices outside $P$:
\begin{align*}
&B-\sum_{v\in V_{1,2}\cup V_{3}^{D}}2D\rho(v)-\sum_{v\in V_{3}^{D-1}}(2D-1)\rho(v)-2Dn_{3}^{D}\\
\leq{}& (n-D-1)D(n-D)+n_3^{D-1}(2D-n-1)-Dn_3^{D}-2Dr^*+D\min\{1,\rho^*\}\\
&\quad-\sum_{v\in V_{1,2}\cup V_{3}^{D}\cup V_{3}^{D-1}}(2D-1)\rho(v).
\end{align*}
Let us now consider the vertices on $P$. We have $W(u_0)=D(1+o_0)$, $W(u_D)=D(1+o_D)$, and $W(u_i)=\ecc(u_i)(2+o_i)$ for $i=1,\ldots,D-1$. Since $\ecc(u_i)=\delta_i+r_i$, the total weight of the vertices on $P$ is
\begin{align*}
& 2D+D(o_0+o_D)+\sum_{i=1}^{D-1}(\delta_i+r_i)(2+o_i)\\
={}& 2\sum_{i=0}^{D-1}\delta_i+2\sum_{i=1}^{D-1}r_i+\sum_{i=1}^{D-1}r_io_i+\sum_{i=0}^{D}\delta_io_i.
\end{align*}
Each edge that links a vertex $v$ outside $P$ to a vertex $u_i$ in $P$ contributes for $r_i\leq \rho(v)$ in the sum $\sum_{i=1}^{D-1}r_io_i$. Hence,
\[
\sum_{i=1}^{D-1}r_io_i \leq \sum_{v\in V_{1,2}}2\rho(v) +
\sum_{v\in  V_{3}^{D}\cup V_{3}^{D-1}}3\rho(v) \leq
\sum_{v\in V_{1,2}\cup V_{3}^{D}\cup V_{3}^{D}}3\rho(v).
\]
Since $2\sum_{i=1}^{D-1}r_i\leq 2r^*(D-1)$, we get the following valid upper bound on the total weight of the vertices on $P$:
\[
2\sum_{i=0}^{D-1}\delta_i+ \sum_{i=0}^{D}\delta_io_i +2r^*(D-1) +
\sum_{v\in V_{1,2}\cup V_{3}^{D}\cup V_{3}^{D}}3\rho(v).  \]
Summing up the bounds for the vertices outside $P$ with those on $P$, we get the following upper bound for the total weight of the vertices in $G$:
\begin{multline*}
(n-D-1)D(n-D)+n_3^{D-1}(2D-n-1)-Dn_3^{D} +
2\sum_{i=0}^{D-1}\delta_i + \sum_{i=0}^{D}\delta_io_i \\
-\sum_{v\in V_{1,2}\cup V_{3}^{D}\cup V_{3}^{D-1}}(2D-4)\rho(v) -
2r^*+D\min\{1,\rho^*\}.
\end{multline*}
Let us decompose this bound into two parts $A_1+A_2$ with $A_1$ being equal to the sum of the first terms of the above upper bound, and $A_2$ being equal to the sum of the last ones:
\begin{align*}
A_1 &= (n-D-1)D(n-D)+n_3^{D-1}(2D-n-1)-Dn_3^{D} +
  2\sum_{i=0}^{D-1}\delta_i+ \sum_{i=0}^{D}\delta_io_i\\
A_2 &= -\sum_{v\in V_{1,2}\cup V_{3}^{D}\cup V_{3}^{D-1}}(2D-4)\rho(v)
  -2r^*+D\min\{1,\rho^*\}.
\end{align*}
\begin{shortlist}
  \item If $r^*=0$, then $A_2=0$, which implies $A_1+A_2=A_1$.
  \item If $\rho^*>0$, then $A_2\leq 4-2D-2r^*+D=4-D-2r^*<0$, which implies
    $A_1+A_2<A_1$.
  \item If $r^*>0$ and $\rho^*=0$, then $A_2=-2r^*<0$, , which implies
    $A_1+A_2<A_1$.
\end{shortlist}

In summary, the best possible upper bound is $A_1$ and is attained only if $r^*=0$, $n_0=0$, $\ecc(v)=D$ for all vertices in $V_{1,2}$, and the vertices outside $P$ are pairwise adjacent. We now have to compare $A_1$ with $f(n,D)$. 

Let us start with $D=3$. In that case, we have
$f(n,3)=14+(n-4)(3n-4+\max\{0,7-n\})$, while $A_1=(n-4)3(n-3) + n_3^{2}(5-n)
- 3n_3^{3} +14 + \sum_{i=0}^{3}\delta_io_i$. Hence, the difference is :
\[
f(n,3)-A_1=(n-4)(5+\max\{0,7-n\})-n_3^{2}(5-n)+3n_3^{3}-\sum_{i=0}^{3}\delta_io_i.
\]
We have 
\[
\sum_{i=0}^{3}o_i\leq 3(n_3^{2}+n_3^{3})+2(n-4-n_3^{2}-n_3^{3})=2(n-4)+n_3^{2}+n_3^{3}.
\]
Since $o_0+o_3\leq n-4$ to avoid a path of length 2 joining $u_0$ to $u_3$, we have 
\[
\sum_{i=0}^{3}\delta_io_i\leq 3(n-4)+2(n-4+n_3^{2}+n_3^{3}). 
\]
Hence, 
\[
f(n,3)-A_1 \geq (n-4)\max\{0,7-n\}-n_3^{2}(7-n)+n_3^{3}.
\]
This difference is minimized if and only if $n_3^{3}=0$, while $n_3^{2}=0$ if $n>7$, $n_3^{2}=0,1,2$ or $3$ if $n=7$, and $n_3^{2}=n-4$ if $n<7$. In all such cases, we get $f(n,3)-A_1=0$. 
\begin{itemize}
  \item If $n=4$, there is no vertex outside $P$, and $G\simeq \extG{4}{3}{0}$
    which is the unique graph in $\mathcal{C}_4^{3}$.
  \item If $n=5$, $n_3^{2}=1$, which means that the unique vertex outside
    $P$ is adjacent to 3 consecutive vertices on $P$. Hence, $G\simeq
    \extG{5}{3}{1}$ which is the unique graph in $\mathcal{C}_5^{3}$.
  \item If $n=6$, $n_3^{2}=2$,  which means that both vertices outside $P$
    are adjacent to 3 consecutive vertices on $P$. If one of them is adjacent
    to $u_0,u_1,u_2$, while the other is adjacent to $u_1,u_2,u_3$, we have
    $G\simeq H_2$. Otherwise, we have $G\simeq \extG{6}{3}{2}$.
  \item If $n=7$, $n_3^{2}\in\{0,1,2,3\}$ and $n_{1,2}=3-n_3^{2}$. If
    $n_{1,2}>0$ then the vertices in $V_{1,2}$ are all adjacent to $u_0$ and
    $u_1$ or all to $u_2$ and $u_3$, since they are pairwise adjacent, and
    they all have eccentricity $3$. So assume without loss of generality, they
    are all adjacent to $u_0$ and $u_1$. Then the vertices in $V_{3}^{2}$ are
    all adjacent to $u_0,u_1,u_2$, else the vertices in $V_{1,2}$ would have
    eccentricity 2. But $G$ is then equal to $\extG{7}{3}{0},\extG{7}{3}{1}$
    or $\extG{7}{3}{2}$. If $n_{1,2}=0$, then the three vertices outside
    $P$ are all adjacent to three consecutive vertices on $P$. If they are
    all adjacent to $u_0,u_1,u_2$, or all to $u_1,u_2,u_3$, then $G\simeq
    \extG{7}{3}{3}$, else $G\simeq H_3$.
  \item If $n>7$, all vertices outside $P$ are adjacent to $u_0,u_1$, or
    to $u_2,u_3$ (so that they all have eccentricity 3). Hence, $G\simeq
    \extG{n}{3}{0}$.
\end{itemize}

Assume now $D\geq 4$. We have
\[
f(n,D)=2\sum_{i=0}^{D-1}\delta_i+\Big(n-D-1\Big)\Big( 2D-1+D(n-D)+\max\{0,3D-n-3\}\Big)
\] and
\[
A_1 = 2\sum_{i=0}^{D-1}\delta_i + (n-D-1)D(n-D) + n_3^{D-1}(2D-n-1) - Dn_3^{D}
+ \sum_{i=0}^{D}\delta_i o_i.
\]
Hence, the difference is:
\[
f(n,D)-A_1 = (n-D-1)(2D-1+\max\{0,3D-n-3\}) - n_3^{D-1}(2D-n-1) + Dn_3^{D}
- \sum_{i=0}^{D}\delta_io_i.
\]
We have
\[
\sum_{i=0}^{D}o_i\leq 3(n_3^{D-1} + n_3^{D}) + 2(n-D-1-n_3^{D-1}-n_3^{D})
= 2(n-D-1) + n_3^{D-1} + n_3^{D}.
\]
Let $p$ be the number of vertices linked to both $u_1$ and $u_{D-1}$. 
\begin{itemize}
  \item If $D\geq 5$, then $p=0$, else $d(u_0,u_D)\leq 4 <D$.
  \item If $D=4$, then no vertex outside $P$ linked to $u_1$ and $u_{D-1}$ can also be linked to $u_0$ or to $u_D$ since $d(u_0,u_D)$ would be strictly smaller than $4$. Since no vertex outside $P$ can be linked to both $u_0$ and $u_D$ (else $d(u_0,u_D)<3$) we have $o_0+o_D\leq n-D-1-p$ and $o_1+o_{D-1}\leq n-D-1+p$. Hence, $o_2\leq n_3^{D-1}+n_3^{D}$. So,
  \begin{align*}
    \sum_{i=0}^{D}\delta_io_i &\leq
      D(n-D-1-p)+(D-1)(n-D-1+p)+(D-2)(n_3^{D-1}+n_3^{D})\\
    &= (n-D-1)(2D-1)+(D-2)(n_3^{D-1}+n_3^{D})-p.
  \end{align*}
  This value is maximized for $p=0$.
\end{itemize}
Hence, in all cases, we have 
\[
\sum_{i=0}^{D}\delta_io_i\leq (n-D-1)(2D-1)+(D-2)(n_3^{D-1}+n_3^{D}).
\]
Hence, 
\[
f(n,D)-A_1\geq (n-D-1)\max\{0,3D-n-3\}-n_3^{D-1}(3D-n-3)+2n_3^{D}.
\]
This difference is minimized if and only if $n_3^{D}=0$, while $n_3^{D-1}=0$ if $n>3(D-1)$, $n_3^{D-1}\in\{0,\ldots,n-D-1\}$ if $n=3(D-1)$, and $n_3^{D-1}=n-D-1$ if $n<3(D-1)$. In all such cases, we get $f(n,D)-A_1=0$. 
\begin{itemize}
  \item If $n<3(D-1)$, then all vertices outside $P$ are adjacent to 3 consecutive vertices on $P$. They are all adjacent to $u_0,u_1,u_2$, or all adjacent to $u_{D-2},u_{D-1},u_D$, else $d(u_0,u_D)\leq 3<D$. Hence, we have $G\simeq \extG{n}{D}{n-D-1}$.
  \item If  $n=3(D-1)$, $n_3^{D-1}\in\{0,\ldots,n-D-1\}$ and $n_{1,2}=2D-2-n_3^{D-1}$. If $n_{1,2}>0$ then the vertices in $V_{1,2}$ are all adjacent to $u_0$ and $u_1$ or all to $u_{D-1}$ and $u_{D}$, since they are pairwise adjacent, and they all have eccentricity $D$. So assume without loss of generality, they are all adjacent to $u_0$ and $u_1$. Then the vertices in $V_{3}^{D-1}$ are all adjacent to $u_0,u_1,u_2$, else $d(u_0,u_D)\leq 3<D$. But $G$ is then equal to $\extG{n}{D}{n_3^D}$. If $n_{1,2}=0$, then all vertices outside $P$ are adjacent to $u_0,u_1,u_2$, or all of them are adjacent to $u_{D-2},u_{D-1},u_D$, else $d(u_0,u_D)\leq 3<D$. Hence, $G\simeq \extG{n}{D}{n-D-1}$.
  \item If $n>3(D-1)$, all vertices outside $P$ are adjacent to $u_0,u_1$, or to $u_2,u_3$ (so that they all have eccentricity D). Hence, $G\simeq \extG{n}{D}{0}$.
\end{itemize}
\end{proof}

\section{Results for a fixed order and no fixed diameter}
	
We now determine the connected graphs that maximize the eccentric connectivity index when the order $n$ of the graph is given, while there is no fixed diameter. Clearly, $\K{3}$ and $\Path{3}$ are the only connected graphs of order $n=3$ and $\eci(\K{3})=\eci(\Path{3})=6$. For $n>3$, $\eci(\M{n})=2n^2-4n-2(n\bmod 2) >n^2-n =\eci(\K{n})$, which means that the optimal diameter is not $D=1$.
\begin{itemize}
  \item If $n=4$, $f(4,3)=14<\eci(\M{4})=16$, which means that $\M{4}$ has maximum eccentric connectivity among all connected graphs with 4 vertices.
  \item If $n=5$, $f(5,3)=27$, $f(5,4)=24$ and $\eci(\M{5})=30$, which means that $\M{5}$ and $H_1$ have maximum eccentric connectivity index among all connected graphs with 5 vertices.
  \item If $n=6$, $f(6,3)=44$, $f(6,4)=42$, $f(6,5)=38$ and $\eci(\M{6})=48$, which means that $\M{6}$ has maximum eccentric connectivity index among all connected graphs with 6 vertices.
\end{itemize}
Assume now $n\geq 7$. We first show that lollipops are not optimal. Indeed, consider a lollipop $\extG{n}{D}{0}$ of order $n$ and diameter $D$. 
\begin{itemize}
  \item If $D=n-1$, then $G\simeq \Path{n}$ which implies 
    \begin{align*}
      \eci(\extG{n}{n-1}{0}) &= \sum_{i=1}^{D-1}2\max\{i,D-i\}+2D =
        \frac{3D^2+D\bmod 2}{2}\\
    &\leq \frac{3D^2+1}{2} = \frac{3n^2}{2}-3n+2 < 2n^2-4n-2 \leq \eci(\M{n}).
    \end{align*}
  \item If $D<n-1$ then either $n<3(D-1)$, and we know from Corollary \ref{cor:max_k} that $\eci(\extG{n}{D}{n-D-1})>\eci(\extG{n}{D}{0})$, or $n\geq 3(D-1)$, in which case we show that  $\eci(\extG{n}{D+1}{n-D-2})>\eci(\extG{n}{D}{0})$. Since $2\sum_{i=0}^{D-1}\max\{i,D-i\}=\frac{3D^2 + D \bmod 2}{2}$, we know from Lemma \ref{lem:eci_for_extremal_graph} that
\begin{align*}
\eci(\extG{n}{D+1}{n-D-2})={}
   &2\sum_{i=0}^{D}\max\{i,D+1-i\}\\
   &+\Big(n-D-2\Big)\Big( 2(D+1)-1+(D+1)(n-D-1)\Big)\\
   &+\Big(n-D-2\Big)\Big(2(D+1)-n-1+(D+1)-2\Big)\\
={}&\frac{3(D+1)^2 + (D+1) \bmod 2}{2}+\Big(n-D-2\Big)\Big(3D+D(n-D)\Big)
\end{align*}
and
\begin{align*}
\eci(\extG{n}{D}{0}) &=
  2\sum_{i=0}^{D-1}\max\{i,D-i\}+\Big(n-D-1\Big)\Big( 2D-1+D(n-D)\Big) \\
&=\frac{3D^2 + D\bmod 2}{2}+\Big(n-D-1\Big)\Big( 2D-1+D(n-D)\Big).
\end{align*}
Simple calculations lead to
\[
\eci(\extG{n}{D+1}{n-D-2})-\eci(\extG{n}{D}{0})=n-2D+(D-1) \bmod 2\geq n-2\left(  \frac{n}{3}+1 \right)=\frac{n}{3}-2>0.
\]
\end{itemize}
Hence, the remaining candidates to maximize the eccentric connectivity index when $n\geq 7$ are $\M{n}$ and $\extG{n}{D}{n-D-1}$. Let
\[
g(n)=\max_{D=\left\lceil\frac{n}{3}+2\right\rceil}^{n-D-1}
\eci(\extG{n}{D}{n-D-1}).\]
We can rewrite $\eci(\extG{n}{D}{n-D-1})$ as follows:
\[
\eci(\extG{n}{D}{n-D-1})=D^3-D^2(n+\frac{5}{2})+D(n^2+5n-1)-n^2-3n+4+ D\bmod 2.
\]
It is then not difficult to show that $g(n)=\eci(\extG{n}{D^*}{n-D^*-1})$ with $D^*=\lceil\frac{n+1}{3}\rceil+1$, and simple calculations lead to
\[
g(n)=\frac{1}{54}(8n^3+21n^2-36n+
\begin{cases}
0&\text{if }n \bmod 6=0\\
6n+1&\text{if }n \bmod 6=1\\
32&\text{if }n \bmod 6=2\\
27&\text{if }n \bmod 6=3\\
6n+28&\text{if }n \bmod 6=4\\
59&\text{if }n \bmod 6=5\\
\end{cases}\quad).
\]
We then have $g(7)=66<68=\eci(\M{7})$, which means that $\M{7}$ has the largest eccentric connectivity among all graphs with $7$ vertices. Also, $g(8)=96=\eci(\M{8})$, which means that both $\extG{8}{4}{3}$ and $\M{8}$ have the largest eccentric connectivity index among all graphs with $8$ vertices. For graphs of order $n\geq 9$, we have $\frac{8n^3+21n^2-36n}{54} > 2n^2-4n$, which means that $\extG{n}{D^*}{n-D^*-1}$ is the unique graph with largest eccentric connectivity index among all graphs with $n$ vertices.
These results are summarized in Table \ref{table1}, where ${\eci_n}^*$ stands for the largest eccentric connectivity index among all graphs with $n$ vertices.

\begin{table}[h!]
	\begin{center}
		\caption{Largest eccentric connectivity index for a fixed order $n$}
		\label{table1}
		
		\begin{tabular}{ccc}
			$n$ & ${\eci_n}^*$ & optimal graphs \\
			\hline
			3&6&$\K{3}$ and $\Path{3}$\\
			4&16&$\M{4}$\\
			5&30&$\M{5}$ and $H_1$\\
			6&48&$\M{6}$\\
			7&68&$\M{7}$\\
			8&96&$\M{8}$ and $\extG{8}{4}{3}$\\
			$\geq 9$&$g(n)$&$\extG{n}{\left\lceil\frac{n+1}{3}\right\rceil+1}{n-\left\lceil\frac{n+1}{3}\right\rceil-2}$
		\end{tabular}
	\end{center}
\end{table}
Note finally that Tavakoli~\etal~\cite{Tavakoli2014} state that 
$g(n)=\eci(\extG{n}{D}{n-D-1})$ with $D=\lceil\frac{n}{3}\rceil+1$ while we have shown that the best diameter for a given $n$ is $D=\lceil\frac{n+1}{3}\rceil+1$. Hence for all $n\geq 9$ with $n\bmod 3=0$, we get a better result. For example, for $n=9$, they consider $\extG{9}{4}{4}$ which has an eccentric connectivity index equal to 132 while $g(9)$=134.

\section{Conclusion}
We have characterized the graphs with largest eccentric connectivity index among those of fixed order $n$ and fixed or non-fixed diameter $D$. It would also be interesting to get such a characterization for graphs with a given order $n$ and a given size $m$. We propose the following conjecture which is more precise than the one proposed in~\cite{Zhang2014}
\begin{conj*}
	
  Let $n$ and $m$ be two integers such that $n\geq 4$ and $m\leq{ {n-1}\choose{2}}.$ Also, let
	\[D=\left\lfloor\frac{2n+1-\sqrt{17+8(m-n)}}{2}\right\rfloor \text{ and } k=m-{{n-D+1}\choose{2}}-D+1
	\]
	Then, the largest eccentric connectivity index among all graphs of order $n$ and size $m$ is attained with $\extG{n}{D}{k}$. Moreover, 
	\begin{itemize}
		\item if $D>3$ then $\eci(G)<\eci(\extG{n}{D}{k})$ for all other graphs $G$ of order $n$ and size $m$.
		\item if $D=3$ and $k=n-4$, then the only other graphs $G$ with $\eci(G)=\eci(\extG{n}{D}{k})$ are those obtained by considering a path $u_0-u_1-u_2-u_3$, and by joining $1\leq i\leq n-3$ vertices of a clique $\K{n-4}$ to $u_0,u_1,u_2$ and the $n-4-i$ other vertices of $\K{n-4}$ to $u_1,u_2,u_3$.
	\end{itemize}
\end{conj*}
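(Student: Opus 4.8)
The plan is to reduce the problem to the fixed-order, fixed-diameter analysis of Theorem~\ref{thm:ub_diam} and then optimize over the diameter. The first point is that the conjectured extremal graph is \emph{uniquely determined} by $n$ and $m$. Writing $q_d=n-d-1$ and $m_{\min}(d)=\binom{n-d+1}{2}+d-1$ for the number of edges of $\extG{n}{d}{0}$, Lemma~\ref{lem:eci_for_extremal_graph} (or a direct count) shows that $\extG{n}{d}{k}$ has $m_{\min}(d)+k$ edges, and one checks that $m_{\max}(d):=m_{\min}(d)+q_d$ satisfies $m_{\max}(d)+1=m_{\min}(d-1)$. Hence the intervals $[m_{\min}(d),m_{\max}(d)]$ tile $\{m : m\le\binom{n-1}{2}\}$ as $d$ decreases from $n-1$ to $3$, so for each admissible $m$ there is exactly one pair $(D,k)$ with $0\le k\le q_D$ for which $\extG{n}{D}{k}$ has size $m$. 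Solving $m=m_{\max}(D)$ for $D$ gives a quadratic whose discriminant is $17+8(m-n)$, and the floor of its root is precisely the stated $D$; then $k=m-m_{\min}(D)=m-\binom{n-D+1}{2}-D+1$. The hypothesis $m\le\binom{n-1}{2}=m_{\max}(3)$ is exactly the range for which $D\ge 3$, that is, for which an $\extG{n}{D}{k}$ (rather than a diameter-$2$ graph such as $\M{n}$) is the relevant candidate.

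I would then establish a size-constrained version of Theorem~\ref{thm:ub_diam}: among connected graphs of order $n$, size $m$ and diameter exactly $D$, the index $\eci$ is maximized by $\extG{n}{D}{k}$. Fixing a diametral shortest path $P=u_0-\dots-u_D$, the weight decomposition of Theorem~\ref{thm:ub_diam} carries over, except that the previously free quantity $\sum_i\delta_i o_i$ is now tied to the budget $m=D+\sum_i o_i+e_{\mathrm{out}}$, where $e_{\mathrm{out}}$ counts the edges among the $q_D$ vertices outside $P$. The governing estimate is an exchange inequality: deleting an edge between two outside vertices costs at most $2D$ in $\eci$ (each endpoint has eccentricity at most $D$), whereas adding an edge from an outside vertex to $u_i$ gains at most $D+\delta_i\le 2D$. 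For a fixed budget it is therefore optimal to make the outside vertices a full clique and to route the surplus edges to the highest-$\delta_i$ vertices $u_0,u_1,u_2$ at one end, which is exactly the structure of $\extG{n}{D}{k}$; the one-unit trade of eccentricity against degree caused by a third attachment at $u_2$ reproduces the coefficient $2D-n-1+\max\{2,D-2\}$ of Lemma~\ref{lem:eci_for_extremal_graph}, while Lemma~\ref{lem:outsiders_connections_to_P} (through the $r^*,\rho^*$ bookkeeping of Theorem~\ref{thm:ub_diam}) forbids any gain from raising an eccentricity above $\delta_i$. The equality analysis follows Theorem~\ref{thm:ub_diam}: for $D\ge 4$ two clique vertices attached to the opposite end-triples $\{u_0,u_1,u_2\}$ and $\{u_{D-2},u_{D-1},u_D\}$ would be adjacent and create a $u_0-c-c'-u_D$ walk of length $3<D$, so all outside vertices must attach to one end and $\extG{n}{D}{k}$ is the \emph{unique} maximizer. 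For $D=3$ the two triples overlap in $u_1,u_2$; when every outside vertex already has a third neighbor (the case $k=n-4$) the attachments split freely between the two ends without shortening $P$ or lowering any eccentricity, giving exactly the ties listed in the conjecture, whereas for $k<n-4$ a two-attachment vertex forces all vertices to a single end and uniqueness is restored.

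It remains to compare $\extG{n}{D}{k}$ with competitors whose diameter $d$ differs from $D$. Here the naive route --- extending Lemma~\ref{lem:eci_for_extremal_graph} to the forced value $k(d)=m-m_{\min}(d)$ and maximizing the resulting cubic in $d$ --- \emph{fails}, because $k(d)\notin[0,q_d]$ for $d\ne D$ and the formula then overshoots the true constrained maximum (for $n=10,\ m=20$ it returns a larger value at $d=4$, where $k(4)=-4$ is meaningless, than at the correct $D=5$). One must instead use genuine size-constrained bounds in two regimes. When $d>D$ one has $m>m_{\max}(d)$, so a diameter-$d$ graph of size $m$ is denser than every $\extG{n}{d}{\cdot}$, and one must show either that no such graph exists or that its extra edges cannot offset the cost of lengthening $P$. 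When $d<D$ one has $m<m_{\min}(d)$, so the outside clique cannot be completed; the degrees, and hence $\eci$, then fall below $\eci(\extG{n}{D}{k})$.

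The principal obstacle is precisely this cross-diameter comparison at fixed size: it pits a shorter, denser graph against a longer, sparser one, a trade-off that Theorem~\ref{thm:ub_diam} never faced because there the diameter was held fixed. I expect the cleanest resolution to be a monotonicity argument, namely an $\eci$-nondecreasing local move that preserves $n$ and $m$ (for example, replacing a poorly placed edge by one that completes the outside clique or extends the diametral path toward the unique admissible $\extG{n}{D}{k}$), together with a proof that iterating it terminates at that graph. The delicate point --- and the likely reason the statement is still only conjectured --- is that such moves can change the diameter discontinuously, so no single potential function simultaneously controlling degrees, eccentricities and size is obviously monotone; making the exchange argument robust across these diameter jumps is where the real work lies.
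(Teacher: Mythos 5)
The statement you set out to prove is, in the paper itself, only a conjecture: the authors state it in their conclusion and offer no proof, so there is no ``paper proof'' to compare against, and your proposal --- by your own candid admission --- does not close it either. The parts of your plan that can be checked are correct and well executed: the edge count $m_{\min}(d)=\binom{n-d+1}{2}+d-1$ for $\extG{n}{d}{0}$, the tiling identity $m_{\max}(d)+1=m_{\min}(d-1)$ (so that each admissible $m$ determines a unique pair $(D,k)$), the fact that $m\le\binom{n-1}{2}$ is exactly $m\le m_{\max}(3)$, and the derivation of the stated floor formula with discriminant $17+8(m-n)$ all verify. Your warning that naively extending Lemma~\ref{lem:eci_for_extremal_graph} to the forced (possibly negative) $k(d)$ overshoots is also correct: for $n=10$, $m=20$ the formula gives $183$ at $d=4$, $k=-4$, versus the true candidate value $176$ at $D=5$, $k=1$, so that tempting shortcut is indeed a dead end.

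Two genuine gaps remain, and they are exactly the mathematical content of the conjecture. First, your size-constrained analogue of Theorem~\ref{thm:ub_diam} rests on an exchange inequality in which both sides carry the same bound $2D$: deleting an outside edge ``costs at most $2D$'' while attaching an edge to $u_i$ ``gains at most $D+\delta_i\le 2D$''. Two upper bounds of equal magnitude decide nothing; to conclude that the optimum completes the outside clique before spending edges on $P$, you need strict comparisons, and these require redoing the $r^*,\rho^*$ bookkeeping of Theorem~\ref{thm:ub_diam} under an edge budget rather than citing it, since in the constrained problem the attachment edges (governing $\sum_i\delta_i o_i$) and the clique edges compete for the same budget $m$ --- a coupling absent from the fixed-diameter proof. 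The equality analysis for $D=3$, $k=n-4$ (the listed family of ties) would likewise have to be rebuilt in this setting, not inherited. Second, the cross-diameter comparison at fixed $m$ --- shorter and denser versus longer and sparser --- is left entirely open: you name a hoped-for $\eci$-monotone local move preserving $n$ and $m$, but you exhibit neither the move nor a termination argument, and you yourself identify the obstruction (the diameter, hence every eccentricity, can jump discontinuously under such moves, so no obvious potential function is monotone). Since this is precisely the step that Theorem~\ref{thm:ub_diam} never had to face, your proposal is a sensible research programme with a correct reduction of the combinatorics of $(D,k)$, but it is not a proof, and the conjecture remains open.
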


\bibliographystyle{acm}

\end{document}